\newtheorem{Thm}{Theorem}[section]
\newtheorem{theorem}[Thm]{Theorem}
\newtheorem{proposition}[Thm]{Proposition}
\newtheorem{corollary}[Thm]{Corollary}
\newtheorem{remark}{Remark}[section]
\newtheorem{definition}[Thm]{Definition}
\title{On the non-perturbative value of the wave function renormalization constant}
\author{Christian D.\ J\"akel\footnote{jaekel@ime.usp.br, Dept.~de Matem\'atica Aplicada, 
Univ. de S\~ao Paulo (USP), Brasil} \ 
and Walter F. Wreszinski\footnote{wreszins@gmail.com, 
Instituto de Fisica, Universidade de S\~ao Paulo (USP), Brazil}}        
\begin{document}

\maketitle

\begin{abstract}
In the perturbative approach to quantum field theory it is common to replace the 
propagator $i (p^{2}-m_{0}^{2}+i\varepsilon )^{-1}$ for a scalar field by
a similar expression, namely $iZ (p^{2}-m^{2}+i\varepsilon )^{-1}$, where 
the shift of the mass from $m_{0}$ to $m$ reflects the mass renormalization and 
the constant~$Z$ is the renormalized field strength (or wave-function). 
We argue that, contrary to general belief, the \emph{nonperturbative value} of~$Z$ 
is \emph{not necessarily} equal to zero in case the two-point function of an 
interacting quantum field theory is, as expected, more 
singular on the light-cone than the corresponding free  field two-point function. 
If, however, (massless) photons or composite (unstable) particles 
are present, the condition $Z=0$ follows from
two qualitatively different 
arguments, one being a theorem due to Buchholz, the other a criterion due to Weinberg.
Hence, the condition $Z=0$ \emph{is}, after all, 
a universal feature of realistic models of elementary particle physics, 
which include massless or unstable particles. 
The results hold within a natural framework which, in the case of gauge theories, 
requires Hilbert space positivity, and therefore the use of non-manifestly covariant gauges.
\end{abstract}

\section{Introduction}

In his recent recollections, 'tHooft (\cite{tHooft}, Sect.~5) emphasizes that 
an asymptotic (divergent) series, such as the power series for the scattering (S) 
matrix in the coupling constant $\alpha = \frac{1}{137}$ in quantum 
electrodynamics ($qed_{1+3}$) does not define a theory rigorously (or mathematically). 
In the same token, Feynman was worried about whether the $qed_{1+3}$ 
S matrix would be ``unitary 
at order 137'' (\cite{Wight2}, discussion  on  p.~126), and in Section 5 of \cite{tHooft}, 'tHooft 
remarks that the uncertainties in the S matrix amplitudes at order~137 for qed are comparable 
to those associated to the ``Landau ghost'' (or pole) \cite{Landau}.  
The situation is considerably worse for strong interactions. 
Except for certain processes accounted for by renormalization group  techniques 
\cite{Weinb3}, the only approach to such fundamental 
issues such as the proton-neutron mass difference~\cite{Bor} or the magnetic moment 
of the neutron \cite{Beane} has remained that of numerical lattice gauge 
theory (LGT).  However, 
LGT is  only  an effective theory in the sense of \cite{Weinb1}, p.~523. 
For  such theories many of the 
pleasant features of quantum field theory, such as invariance or symmetry 
properties, are lost. The 
main reason to believe in quantum field theory (see also \cite{Wight1}) is, therefore, 
in spite of the spectacular success of perturbation theory (\cite{Weinb1}, Chap.~11 
and \cite{Weinb2} Chap.~15), strongly tied 
to non-perturbative approaches. 

This is, in particular, true for quantum 
electrodynamics in the Coulomb gauge in three space dimensions -- $qed_{1+3}$ -- 
for which a recent result \cite{JaWre2} establishes positivity of the (renormalized) 
energy, uniformly in the volume~($V$) and ultraviolet ($\Lambda$) cutoffs. This stability 
result may be an indication of the absence of Landau poles or ghosts in $qed_{1+3}$ 
(see also \cite{Frohlich}). In order to achieve this, the theory in Fock space (for 
fixed values of the cutoffs)  is exchanged for a formulation in which 
(in the words of Lieb and Loss, who were the first to exhibit this phenomenon in a 
relativistic model \cite{LLrel}), ``the electron Hilbert space is linked 
to the photon Hilbert space in an inextricable way''. Thereby, ``dressed photons'' 
and ``dressed electrons'' arise as new entities. This,
required by stability, provides a physical characterization of the
otherwise only mathematically motivated non-Fock representations which arise in an
interacting theory (\cite{Wight},\cite{Wight1}), and forms the backbone of the constructive
part of our approach, whose final aim is to prove the axioms of the framework expounded
in section 3. The non-unitary character of the transformations to the physical Hilbert space,
when the space and ultraviolet cutoffs are removed, may explain the absence of a ``universal''
high-energy behaviour sought by Landau, when expressing his doubts about the consistency
of relativistic quantum field theory (see the discussion in~\cite{Wight1}).   

Although Weinberg 
(\cite{Weinb2}, Chap.~18, p.~136) stated that  ``there is 
today a widespread view that interacting quantum field theories that are 
not asymptotically free, like quantum electrodynamics, are not mathematically 
consistent'', we tend to disagree. 
The renormalization group arguments, which lead to the conclusion stated by Weinberg
ignore the non-unitary transformations mentioned in the last paragraph, they  
are part of the mathematical structure of relativistic quantum field theories 
(non-Fock representations) which is simply \emph{not accounted} for by the 
Standard arguments. As, strictly speaking, 
the renormalization group arguments are not conclusive, as Weinberg himself 
remarks (ibid, p.~137), we argue that our approach at least carries the 
benefit of the doubt.

\section{The Field Algebra}

In order to formulate the above-mentioned phenomenon, we consider a 
\emph{field algebra} ${\cal F}^{0} \equiv \{A_{\mu}^{0},\psi^{0},\bar{\psi^{0}}\}$ (containing 
the identity) generated by
the free vector potential $A_{\mu}^{0}, \mu=0,1,2,3$, and the 
electron-positron fields~$\psi^{0},\bar{\psi^{0}}$. 
We may assume that the field algebra is initially defined 
on the Fock-Krein (in general, indefinite-metric, see \cite{Bognar}) tensor product of 
photon and fermion Fock spaces. However, 
of primary concern for us will be an \emph{inequivalent} 
representation of the field algebra  ${\cal F}^{0}$ on a (physical) 
Hilbert space ${\cal H}$, with generator of time-translations - the physical 
Hamiltonian $H$ - satisfying \emph{positivity}, \emph{i.e.}, 
	\begin{equation}
		H \ge 0 \; ,
		\label{(1)}
	\end{equation}
and such that
	\begin{equation}
		H \Omega = 0  \; ,
		\label{(2)}
	\end{equation}
where $\Omega \in {\cal H}$ is the vacuum vector, and
	\begin{align}
		A_{\mu} &\equiv A_{\mu}(A_{\mu}^{0},\psi^{0}, \bar{\psi^{0}})  \; ,
		\label{(3)}
		\\
		\Psi & \equiv \Psi(A_{\mu}^{0},\psi^{0}, \bar{\psi^{0}})  \; ,
		\label{(4.1)}
		\\
		\bar{\Psi} & \equiv \bar{\Psi}(A_{\mu}^{0},\psi^{0}, \bar{\psi^{0}})  \; .
		\label{(4.2)}
	\end{align}  

\subsection{Gauge transformations}
We may define, as usual, the (restricted) class of c-number $U(1)$ local 
gauge transformations, acting on ${\cal F}^{0}$, by the maps
	\begin{align}
		A_{\mu}^{0}(f) & \to A_{\mu}^{0}(f) +c \, \langle f,\partial^{\mu}u \rangle  \; ,
		\label{(5.1)}
		\\
		\psi^{0}(f) & \to \psi^{0} \bigl( {\rm e}^{iu} f \bigr)  \; ,
		\label{(5.2)}
		\\
		\bar{\psi^{0}}(f) & \to \bar{\psi^{0}} \bigl( {\rm e}^{-iu} f \bigr)  \; ,
		\qquad f \in {\cal S}(\mathbb{R}^{1+s}) \; . 
		\label{(5.3)}
	\end{align}
In qed, $c=\frac{1}{e}$ and $u$ satisfies certain regularity conditions, which guarantee that 
	\[
		{\rm e}^{\pm iu}f \in {\cal S}(\mathbb{R}^{1+s}) \quad \text{if} \; f \in {\cal S}(\mathbb{R}^{1+s}) \; , 
		\quad \text{and} \quad  | \langle f,\partial^{\mu}u \rangle | < \infty \; . 
	\]
Here, ${\cal S}$ denotes  Schwartz space (see, \emph{e.g.}, \cite{BB}), 
$\langle \, . \, , \, . \, \rangle$  denotes  the $L^{2}(\mathbb{R}^{1+s})$ scalar product 
and $s$ is the space dimension. Such transformations have been considered in a 
quantum context, that of
(massless) relativistic qed in two  
space-time dimensions (the Schwinger model) by Raina and Wanders,
but their unitary implementability is a delicate matter \cite{RaWa}. 
We shall use (6)-(8) merely as
as a guiding principle to construct the observable algebra, to which we now turn. 

The \emph{observable algebra} is assumed to consist of 
gauge-invariant objects, namely the fields
	\begin{equation}
		F_{\mu,\nu} = \partial_{\mu} A_{\nu}-\partial_{\nu}A_{\mu}  \; ,
		\label{(6)}
	\end{equation}
with $A_{\mu}$ given by (3), describing the dressed photons, and the quantities (10)
below. In order to define them, we assume the existence of gauge-invariant
quantities $\Psi,\bar{\Psi}$ in \eqref{(4.1)}, \eqref{(4.2)}, which create-destroy 
electrons-positrons ``with their photon clouds''. 

While we hope that the results in \cite{JaWre2} will eventually lead to an (implicit or explicit) 
expression for $\Psi,\bar{\Psi}$, it should be emphasized that this is a very difficult, 
open problem; see the important work of Steinmann in perturbation theory \cite{Steinmann1}. 
We also note that we shall only consider the \emph{vacuum sector}, 
\emph{i.e.}, and therefore the fermion part of the observable algebra will be assumed to consist of the combinations
	\begin{align}
		A(f,g) & \equiv \bar{\Psi}(f)\Psi(g) 
		\mbox{ with } f,g \in {\cal S}(\mathbb{R}^{1+s})  \; ,
		\nonumber
		\\
		B(f,g) & \equiv \Psi(f)\bar{\Psi}(g) 
		\mbox{ with } f,g \in {\cal S}(\mathbb{R}^{1+s})  \; .
		\label{(7)}
	\end{align}
The existence of \emph{charged sectors} is a related 
open problem, which will not concern us.  

\section{A framework for relativistic quantum gauge theories}

An appealing framework for relativistic quantum gauge theories based 
on a new concept of string localization,
but excluding indefinite metric and ``ghosts'', has been recently 
implemented by Schroer, Mund, and
Yngvason. For references to the literature on this topic, we refer to 
Bert Schroer's recent preprint \cite{Schrn}.
The framework we will use is due to Lowenstein and 
Swieca \cite{LSwi}. Raina and Wanders \cite{RaWa} have used it to 
construct a theory of $qed_{1+1}$, the Schwinger model.
It has one important point in common with the
string-localisation theory: the absence of ``ghosts''. 

The theory will be defined by its \emph{n-point Wightman functions} \cite{StreWight} of 
observable fields. Alternatively, a Haag-Kastler theory~\cite{HK} 
may be envisaged. It has been shown in the seminal work of the latter authors that 
the whole content of a theory can be expressed in terms of its observable algebra. 
In the case of gauge theories, the latter corresponds to the algebra generated by gauge 
invariant quantities \cite{StrWight}. As remarked by Lowenstein and Swieca \cite{LSwi}, 
the observable algebra, being gauge-invariant, should have the same representations, 
independently of the gauge of the field algebra it is constructed from. Thus, $n$-point 
functions constructed over the observable algebra should be the same in all gauges. 
These remarks fully justify the  usage  
of non-covariant gauges, which, as we shall see, are of particular importance in a 
non-perturbative framework. For scattering theory 
and particle concepts within a theory of local observables, see \cite{ArHa}, \cite{BuPoSt}, 
and \cite{BuchSum} for a lucid review.

There exist various arguments supporting the use of non-covariant gauges in 
relativistic quantum field theory: they are of both physical and mathematical nature. 
In part one of his treatise, Weinberg notes (\cite{Weinb1}, p.~375, Ref.~2): ``the use of 
Coulomb gauge in electrodynamics was strongly advocated by Schwinger \cite{Schw1} 
on pretty much the same grounds as here: that we ought not to introduce photons with 
helicities other than $\pm 1$''. Indeed, as shown by Strocchi \cite{Strocchi}, a framework 
excluding ``ghosts'' necessarily requires the use of non-manifestly covariant gauges, 
such as the Coulomb gauge in $qed_{1+3}$, the Weinberg or unitary gauge in the Abelian 
Higgs model \cite{Weinb2},  and the Dirac \cite{Dirac} or light-cone gauge in quantum 
chromodynamics \cite{SriBro}. Another instance of the physical-mathematical advantage 
of a non-covariant gauge is the ``$\alpha=\sqrt{\pi}$'' gauge in massless $qed_{1+1}$ - the 
Schwinger model \cite{Schw} - see \cite{LSwi}, \cite{RaWa}. As in the Coulomb gauge 
in $qed_{1+3}$, there is no need for indefinite metric in this gauge, \emph{i.e.}, the zero-mass 
longitudinal part of the current is gauged away, and one has a solution of Maxwell's 
equations (as an operator-valued, distributional identity)
	\begin{equation}
		\partial_{\nu}F^{\mu,\nu}(x) = -ej^{\mu}(x)
		\label{(8)}
\end{equation}
on the whole Hilbert space. This is an important ingredient 
in Buchholz's theorem \cite{Buch}, to which we come back in the sequel. 

The structure of the observable algebra is quite simple in  the Coulomb 
gauge: the field \eqref{(6)} is 
just the electric field, which is defined in terms of a massive scalar field, the 
quantities \eqref{(7)} are, in this gauge, rigorous versions of the (path-dependent) quantities
	\begin{equation}
		\psi(x) {\rm e}^{ ie \int_{x}^{y} {\rm d} t_{\mu} \; A^{\mu}(t)} \psi^{*}(y)
\end{equation}
and  their adjoints (in the distributional sense), see \cite{LSwi} and \cite{RaWa}. 
In the case of $qed_{1+3}$, such quantities are plagued by infrared divergences, 
see the discussion in \cite{Steinmann1}. As a consequence of  the simple structure of 
the observable algebra, one arrives at a correct physical-mathematical picture 
of spontaneous symmetry breakdown (\cite{LSwi},\cite{RaWa}); in covariant 
gauges this picture is masked by the presence of spurious gauge excitations.

In \cite{StreWight}, pp.~107-110, it is shown that if the (n-point) \emph{Wightman functions} 
satisfy 
\begin{itemize}
\item [$a.)$] the relativistic transformation law; 
\item [$b.)$] the spectral condition; 
\item [$c.)$] hermiticity; 
\item [$d.)$] local commutativity; 
\item [$e.)$] positive-definiteness, 
\end{itemize}
then they are the \emph{vacuum expectation values of a field theory} satisfying 
the Wightman axioms, except, eventually,  the uniqueness of 
the vacuum  state. We refer to \cite{StreWight} or \cite{RSII}  for 
an account of Wightman theory, and for the description of these properties.  It has been shown
in \cite{LSwi}, \cite{RaWa} that 
$qed_{1+1}$ in the ``$\alpha=\sqrt{\pi}$'' gauge satisfies $a.)-e.)$. 
The crucial positive-definiteness condition~e.) has been 
shown in \cite{LSwi} to be a consequence of the positive-definiteness of a subclass 
of the n-point functions of the Thirring model~\cite{Thirrm} in the formulation of 
Klaiber~\cite{Klai}. Positive-definiteness of the Klaiber n-point functions was 
rigorously proved by Carey, Ruijsenaars and Wright \cite{CRW}. Uniqueness of the 
vacuum holds in each irreducible subspace of the (physical) Hilbert 
space~${\cal H}$~\cite{RaWa}, as a result of the cluster property; see also \cite{LSwi}.

For $qed_{1+3}$ in the Coulomb gauge, we shall \emph{assume} $a.)-e.)$ for the 
$n$-point functions of the observable  fields. We conjecture that this framework 
is also adequate for other relativistic quantum gauge field theories, as previously 
discussed. Concerning the case of $qed_{1+3}$ in the Coulomb gauge, for both 
the electron and the photon propagators in perturbation theory, dynamics ``restores'' 
Lorentz covariance, \emph{i.e.}, the instantaneous Coulomb interaction and the transverse 
part combine to yield a Lorentz invariant expression (see~\cite{JJS}, Sections 4-4 
and 4-6). Assuming Lorentz covariance, positivity of the energy (1) in the physical 
Hilbert space ${\cal H}$ yields the spectral condition~b.). The crucial mathematical 
reason for choosing a non-covariant gauge is, as we shall see, the positive-definiteness 
condition e.). Concerning the uniqueness of the vacuum, we shall assume it is valid by 
restriction to an irreducible component of ${\cal H}$, as in $qed_{1+1}$.

We hope that the new constructive approach mentioned in the introduction may be
adequate to prove the above axioms. There are hopes, so far, to  arrive at the positivity condition (1),
which, together with a.), yields the spectral condition b.), but the details of the removal
of the cutoffs remain to be studied. The origin of the constructive
approach remains, however, the same as Wightman's, namely, canonical quantization of classical 
Lagrangian field theory, by necessity with cutoffs, which are, however, subsequently removed
(see \cite{Wight},\cite{Wight2} for introductions).

\subsection{The K\"all\'en-Lehmann representation}

A major dynamical issue in quantum field theory is the (LSZ or Araki-Haag) asymptotic 
condition (see, \emph{e.g.}, \cite{BuchSum} and references given there), which 
relates the theory, whose objects are the fundamental observable fields, to \emph{particles}, 
described by physical parameters (\emph{mass} and \emph{charge} in qed). This issue is equivalent 
to the renormalization (or normalization) of perturbative quantum field theory, which 
itself is related to the construction of continuous linear extensions of certain functionals, 
such as to yield well-defined tempered 
distributions (see \cite{Scharf} and references 
given there). On the other hand, in a non-perturbative framework, a theory of renormalization 
of masses and fields also exists, and ``has nothing directly to do with the presence of 
infinities'' (\cite{Weinb1}, p.~441, Sect.~10.3). We adopt a related proposal, which we formulate 
here, for simplicity, for a theory of a self-interacting scalar field $A$ of mass $m$ satisfying 
the Wightman axioms (modifications are mentioned in the sequel).  
We assume that  $A$ is an operator-valued 
tempered distribution on the Schwartz space ${\cal S}$ (see \cite{RSII}, Ch.~IX). 

\bigskip
We have the following  result, concerning the spectral representation of the 
two-point function $W_{2}$ (\cite{RSII}, p.~70, Theorem IX-34):

\begin{theorem}[The K\"{a}ll\'{e}n-Lehmann representation]
\label{th:2} 
	\begin{equation}
		W_{2}^{m}(x-y)= \langle \Omega,A(x)A(y)\Omega \rangle
		= \frac{1}{i} \int_{0}^{\infty} {\rm d}\rho(m_\circ^{2}) \; 
		\Delta_{+}^{m_\circ}(x-y) \; , 
		\label{(9)}
\end{equation}
where $\Omega$ denotes the vacuum vector, $x=(x_{0},\vec{x})$, and
	\begin{align}
		\Delta_{+}^{m_\circ}(x) =\frac{i}{2(2\pi)^{3}} 
		\int_{\mathbb{R}^{3}} {\rm d}^{3}\vec{k} \; 
		\frac{ {\rm e}^{-ix_{0}\sqrt{m_\circ^{2}+\vec{k}^{2}}
		+i\vec{x}\cdot\vec{k}}}{\sqrt{m_\circ^{2}+\vec{k}^{2}}} 
		\label{(10)}
	\end{align}
is the two-point function of the free scalar field of mass $m_\circ$, and $\rho$ is 
a polynomially-bounded measure on $[0,\infty)$, \emph{i.e.},
	\begin{equation}
		\int_{0}^{L} {\rm d} \rho(m_\circ^{2}) \le C(1+L^{N})
		\label{(11)}
\end{equation}
for some constants $C$ and $N$. It is further assumed that
	\begin{equation}
		\langle \Omega,A(f)\Omega \rangle = 0 \qquad \forall  f \in {\cal S} \; . 
		\label{(12)}
\end{equation}
\end{theorem}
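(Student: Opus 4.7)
The plan is to pass to momentum space and extract the claimed decomposition from three Wightman axioms applied in sequence: translation invariance (reducing $W_{2}$ to a function of $\xi := x-y$), the spectral condition (pinning the support of the Fourier transform in the closed forward light cone $\bar{V}^{+}$), and Lorentz invariance (forcing a mass-shell decomposition). First I would use translation invariance of $\Omega$ and covariance of $A$ to write $W_{2}(x,y) = W_{2}(\xi)$ as a tempered distribution, and then insert the spectral resolution $\{E(p)\}$ of the energy-momentum operator to obtain
\begin{equation*}
W_{2}(\xi) = \int e^{-i p \cdot \xi}\, d\mu(p),
\end{equation*}
where $d\mu(p) := d\langle A(0)\Omega, E(p)\, A(0)\Omega\rangle$ (suitably smeared in $A$) is a positive tempered measure on $\mathbb{R}^{1+3}$. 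The spectral condition forces $\mathrm{supp}\,\mu \subseteq \bar{V}^{+}$, while hermiticity of the scalar field gives $\mu \ge 0$ directly from the spectral theorem (equivalently, from positive-definiteness e.) tested against $f \in \mathcal{S}$).

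Next, proper orthochronous Lorentz invariance of $\Omega$ together with scalar covariance of $A$ yields $\mu \circ \Lambda = \mu$ for all $\Lambda \in L_{+}^{\uparrow}$. At this stage I would invoke the classical structure theorem for Lorentz-invariant positive tempered measures supported in $\bar{V}^{+}$: any such measure can be written uniquely in the form
\begin{equation*}
d\mu(p) = \theta(p_{0})\, \delta(p^{2} - m_{\circ}^{2})\, d^{4}p \; d\rho(m_{\circ}^{2}),
\end{equation*}
with $\rho$ a positive polynomially bounded measure on $[0,\infty)$. Concretely, one slices $\bar{V}^{+} \setminus \{0\}$ by the mass shells $H_{m_{\circ}^{2}} := \{p : p^{2} = m_{\circ}^{2},\, p_{0} \ge 0\}$ and disintegrates $\mu$ against the pushforward under $p \mapsto p^{2}$; Lorentz invariance identifies the conditional measure on each shell with the unique (up to normalisation) Lorentz-invariant surface measure there, which is precisely the measure appearing in \eqref{(10)}. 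Substituting back into the Fourier representation and recognising the inner integral as $\Delta_{+}^{m_{\circ}}(\xi)$ gives \eqref{(9)}; positivity of $\mu$ transfers to $\rho \ge 0$, and temperedness of $W_{2}$ forces the polynomial bound \eqref{(11)}.

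The main obstacle is this structural decomposition step. A naive slicing by $p \mapsto p^{2}$ is singular at the apex $p = 0$, and the mass shells become asymptotically lightlike as $|p| \to \infty$, so the disintegration must be justified by working on a smooth partition of $\bar{V}^{+} \setminus \{0\}$ together with a separate analysis of any contribution concentrated at the origin; the latter is controlled by the standing assumption \eqref{(12)} together with the cluster behaviour of $W_{2}$. This is classically carried out via M\'eth\'ee's theorem on Lorentz-invariant distributions, or alternatively by a direct Bochner--Schwartz-style argument exploiting the positivity of $\mu$. Once this decomposition is secured, the remaining identifications are routine manipulations of Fourier integrals.
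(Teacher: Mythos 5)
Your proposal is correct and follows essentially the same route as the paper's proof, which is given only by citation to Reed--Simon (Vol.~II, Theorem IX.34): there, too, one passes to momentum space, uses positive-definiteness (Bochner--Schwartz, equivalently the spectral theorem applied to $A(f)\Omega$) to get a polynomially bounded positive measure supported in $\bar{V}^{+}$, and then invokes the classification of Lorentz-invariant tempered measures on $\bar{V}^{+}$ (Reed--Simon's Theorem IX.33, the analogue of your M\'eth\'ee/disintegration step), with assumption \eqref{(12)} (plus uniqueness of the vacuum) eliminating the possible point mass at $p=0$. Your flagged technical concerns --- the apex singularity of the slicing $p \mapsto p^{2}$ and the smearing needed to define the vector-valued spectral measure --- are exactly the points handled in that reference, so nothing is missing.
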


Note that \eqref{(9)} is symbolic; for its proper meaning, which relies on \eqref{(11)},
see \cite{RSII}.

\begin{proposition} For a scalar field of mass $m \ge 0$, the 
measure ${\rm d}\rho(m_\circ^{2}) $ 
appearing in the K\"{a}ll\'{e}n-Lehmann spectral representation 
allows a decomposition
\label{prop:1} 
	\begin{equation}
		{\rm d}\rho(m_\circ^{2}) = Z\delta(m_\circ^{2}-m^{2}) + d\sigma(m_\circ^{2}) \; , 
		\label{(13)}
\end{equation}
where
	\begin{equation}
		0 < Z < \infty
		\label{(14)}
\end{equation}
and
	\begin{equation}
		\int_{0}^{L} {\rm d} \sigma(a^{2}) \le C_{1}(1+L^{N_{1}})
		\label{(15)}
\end{equation}
for some constants $C_{1}$ and $N_{1}$. 
\end{proposition}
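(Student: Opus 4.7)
The plan is to extract the atomic part of $d\rho$ at $m_\circ^{2}=m^{2}$ from the one-particle subspace of $\mathcal{H}$. Recall that the K\"all\'en-Lehmann representation (9) is obtained by inserting the spectral resolution of the energy-momentum operator $P^{\mu}$ between $A(x)$ and $A(y)$ in $W_{2}^{m}(x-y)$; Poincar\'e invariance together with the spectral condition~b.) turns the joint spectral measure into the positive polynomially-bounded measure $\rho$ on $[0,\infty)$ in the invariant variable $p^{2}$. Since the field in question has mass $m$, the physical Hilbert space $\mathcal{H}$ is assumed to contain a one-particle subspace $\mathcal{H}_{1}$ corresponding to the isolated mass hyperboloid $p^{2}=m^{2}$, with orthogonal projection $E_{1}$. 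The component of $A(f)\Omega$ lying in $\mathcal{H}_{1}$ contributes to $W_{2}^{m}$ precisely a term $Z\,\Delta_{+}^{m}(x-y)/i$, where $Z$ is determined, up to conventional normalization, by the Lorentz-invariant vacuum-to-one-particle matrix element of $A(0)$. One then defines $d\sigma(m_\circ^{2}):=d\rho(m_\circ^{2})-Z\,\delta(m_\circ^{2}-m^{2})$, which is manifestly a positive measure supported on $[0,\infty)\setminus\{m^{2}\}$.

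The bound $Z<\infty$ follows immediately from (11): since $Z=\rho(\{m^{2}\})$, we have $Z\le\int_{0}^{m^{2}+1}d\rho(m_\circ^{2})\le C(1+(m^{2}+1)^{N})<\infty$. For the strict positivity $Z>0$ one must invoke the assumption that $A$ is an \emph{interpolating field} for a physical particle of mass $m$, i.e.\ that the on-shell vacuum-to-one-particle matrix element $\langle\Omega, A(0)\,|\,p\rangle$ does not vanish identically on the mass-$m$ hyperboloid; this is precisely equivalent to the statement that $\rho$ carries a nontrivial atom at $m_\circ^{2}=m^{2}$. The polynomial bound (15) on $\sigma$ is then inherited from (11): since $d\sigma\le d\rho$ as positive measures on $[0,\infty)$, one may take $C_{1}=C$ and $N_{1}=N$.

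The conceptual core of the argument is therefore the strict positivity $Z>0$, which is \emph{not} a consequence of the Wightman framework (a.)--(e.)) alone: it is an additional physical hypothesis on how the field couples to asymptotic particle states. The main obstruction is that whether $A$ genuinely couples to a sharp one-particle shell can fail when massless quanta (e.g.\ photons) or unstable/composite states are present, because the hyperboloid $p^{2}=m^{2}$ is then dissolved into the continuum at threshold and the atom at $m^{2}$ disappears from $\rho$. This is precisely the dichotomy that the subsequent sections of the paper exploit: the theorems of Buchholz and Weinberg force $Z=0$ in the obstructed situations, whereas in their absence the one-particle weight survives and the Proposition holds as stated.
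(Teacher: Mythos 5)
Your proof is correct in substance, but it takes a genuinely different route from the paper. The paper's own proof is purely measure-theoretic: it applies the Lebesgue decomposition ${\rm d}\rho = {\rm d}\rho_{p.p.} + {\rm d}\rho_{s.c.} + {\rm d}\rho_{a.c.}$, invokes the physical assumption that the mass spectrum of nature is finite (hence the pure point part is discrete), and then \emph{assumes} --- the authors explicitly flag this as an assumption, and again in a subsequent remark where they concede that in general only $0 \le Z < \infty$ holds --- that the discrete component sits at the renormalized mass $m^{2}$ with weight $Z$ satisfying \eqref{(14)}; the bound \eqref{(15)} then follows from \eqref{(11)} exactly as in your last step. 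You instead work upstairs in the Hilbert space: you extract the atom at $m_\circ^{2}=m^{2}$ via the spectral projection onto the one-particle subspace, derive $Z<\infty$ from \eqref{(11)}, and isolate $Z>0$ as an interpolating-field hypothesis (nonvanishing vacuum-to-one-particle matrix element). This buys you something the paper leaves implicit: a precise identification of \emph{which} physical input makes $Z>0$ true, rather than the paper's blanket appeal to the discreteness of the observed mass spectrum, and it connects cleanly to the Buchholz/Weinberg dichotomy exploited later. What the paper's route buys in exchange is generality and economy: the Lebesgue decomposition needs no one-particle structure at all, which matters because the Proposition is stated for $m \ge 0$ and is later applied to the photon via \eqref{(19.2)}, where your phrase ``isolated mass hyperboloid'' fails ($p^{2}=0$ touches the continuum); the measure-theoretic extraction of the atom at $m^{2}$ survives in that case, but your spectral-projection justification of the $Z\,\Delta_{+}^{m}$ contribution would need rewording there. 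This is a presentational caveat, not a gap: both arguments rest on the same extra-mathematical assumption for strict positivity, which you state more honestly than the paper does.
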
 

\begin{proof} 
By the Lebesgue decomposition (see, \emph{e.g.}, Theorems I.13, I.14, p.~22 of \cite{RSI})
	\begin{equation}
		{\rm d}\rho = {\rm d}\rho_{p.p.} + {\rm d}\rho_{s.c.} + {\rm d}\rho_{a.c.} \; , 
		\label{(16.1)}
\end{equation}
where p.p.~denotes the \emph{pure point}, s.c.~denotes 
 the \emph{singular continuous} and 
a.c.~denotes 
 the \emph{absolutely continuous} parts of ${\rm d}\rho$. 
Since there exists only a finite number of masses
in nature, there is no accumulation point of the values of the
mass in a realistic theory, and the pure point part of the measure is, in fact, discrete. 
This is, however, mathematically speaking, an assumption, as well as the 
identification of the masses in the corresponding terms in (21), (29) and (30) below,  
with those associated to the scalar, photon and electron-positron fields, 
respectively. This assumption is verified in the case of free fields, but there 
is one important difference: in the interacting case, the masses in (21), (29) 
and (30) are interpreted as renormalized masses and may differ from 
the ``bare" masses, \emph{i.e.}, those originally occurring in the fields 
obtained through canonical quantization in Fock space with cutoffs.
For a scalar field of mass $m$, we obtain
         \begin{equation}
		{\rm d}\rho_{p.p.}(m_\circ^{2}) = Z\delta(m_\circ^{2}-m^{2}) \; ,
		\label{(16.2)}
\end{equation}
where $Z$ satisfies \eqref{(14)}, and, by \eqref{(11)}, 
\eqref{(13)} and \eqref{(16.1)}, ${\rm d}\sigma$ satisfies \eqref{(15)}. 
\end{proof}

\begin{remark}
\label{Remark 0.1}
Of course, $Z=0$ in \eqref{(16.2)}, if there is no discrete component of mass $m$ in the total 
mass spectrum of the theory. In general $Z$ in each discrete component of $d\rho$ has only to satisfy
         \begin{equation}
               0 \le Z < \infty
               \label{(14.1)} \; , 
\end{equation}  
because of the positive-definiteness conditions e.) (or the positive-definite 
Hilbert space metric).
\end{remark}
 
\begin{remark}
\label{Remark 0.2}
Expression \eqref{(16.2)} corresponds precisely to (\cite{Weinb1}, p.~461, Equ.~(10.7.20)). 
Thus, Proposition~\ref{prop:1} is just a mathematical statement of the nonperturbative 
renormalization theory, as formulated by Weinberg. Thus, 
the physical interpretation of $Z$ is that $0< Z < \infty$ 
is the field strength renormalization constant, due to the 
fact that the physical field $A_{phys}$ is normalized (or renormalized) by the 
one-particle condition (\cite{Weinb1}, (10.3.6)) which stems from the LSZ (or 
Haag-Ruelle) asymptotic condition (see also Sections~2 and~5 of \cite{BuchSum} 
and references given there for the appropriate assumptions). A general field, 
as considered in (9), does not have this normalization. By the same token, 
the quantity $m^{2}$ in \eqref{(16.2)} is interpreted as the physical (or renormalized) 
mass associated to the scalar field.  
No further conditions are expected to be imposable on quantum fields and, indeed, 
we shall see that the additional requirement that the interacting fields satisfy the 
equal-time commutation relations (ETCR) leads to an \emph{entirely different picture}. 
As we shall see, it is the latter picture which seems to agree with the general 
belief (\cite{Weinb1}, \cite{Barton1}, \cite{Lehmann}, \cite{Kallen}, 
\cite{Haag}, \cite{Wight}) that the condition
	\begin{equation}
		Z=0
		\label{(17)}
	\end{equation}
is a \emph{general} condition in relativistic quantum field theory (rqft). It is immediate, 
however, that this proposed universality of \eqref{(17)} contradicts the definition of $Z$ given 
by \eqref{(13)} of Proposition~\ref{prop:1}, because the latter implies the absence of 
one-particle hyperboloids. Their presence is, however, the central building block of 
scattering theory \cite{BuchSum}.  
\end{remark}

For the purposes of identification with Lagrangian field theory, one 
may equate the $A(.)$ of \eqref{(9)} with the ``bare'' scalar 
field $\phi_{B}$ (\cite{Weinb1}, p.~439), whereby
	\begin{equation}
		A = \sqrt{Z} A_{phys} \; . 
		\label{(18.1)}
	\end{equation}
Similarly, for the electron and photon fields, conventionally,
	\begin{align}
		\Psi & = \sqrt{Z_{2}} \,  \Psi_{phys} \; , 
		\label{(18.2)}
		\\
		F_{\mu,\nu} & = \sqrt{Z_{3}} \, F_{\mu,\nu, \, phys} \; . 
		\label{(18.3)}
	\end{align}
Above, we refer to the previously discussed fields in \eqref{(4.1)}, \eqref{(4.2)} 
and \eqref{(6)}.
 
For $F_{\mu,\nu}$~and $\Psi$, we have the analogues of \eqref{(9)}, namely 
	\begin{align}
		& \langle F_{\mu,\nu}(x) \Omega, F_{\mu,\nu}(y)\Omega \rangle 
		= \int {\rm d} \rho_{ph}(m_\circ^{2})\int\frac{{\rm d}^{3}p}{2p_{0}} 
		(-p_{\mu}^{2}g_{\nu\nu}-p_{\nu}^{2}g_{\mu\mu}) {\rm e}^{ip \cdot (x-y)} \; , 
		\label{(19.1)}
	\end{align}
with $\mu \ne \nu$, no summations involved, and $p_{0}=\sqrt{ {\vec{p}\,}^{2}+m_\circ^{2}}$. 
Denoting spinor indices by $\alpha,\beta$, we have
	\begin{align}
		S_{\alpha,\beta}^{+}(x-y) 
		& = \langle \Omega, \Psi_{\alpha}(x)\bar{\Psi}_{\beta}(y) \Omega \rangle
		\label{(20.1)} \\
		& = \int_{0}^{\infty} {\rm d}\rho_{1}(m_\circ^{2}) S_{\alpha,\beta}^{+}(x-y;m_\circ^{2}) 
		+ \delta_{\alpha,\beta} \int_{0}^{\infty} {\rm d} \rho_{2}(m_\circ^{2}) 
		\Delta^{+}(x-y;m_\circ^{2}) \, , 
		\nonumber
	\end{align}
with ${\rm d} \rho_{ph}, {\rm d} \rho_{1}, {\rm d} \rho_{2}$ positive, polynomially bounded 
measures, and $\rho_{1}$ satisfying certain bounds with respect 
to $\rho_{2}$ (\cite{Lehmann}, p.~350). Again, as in \eqref{(16.2)},
	\begin{equation}
		{\rm d} \rho_{ph}(m_\circ^{2}) = Z_{3} \delta (m_\circ^{2}) 
		+ {\rm d} \sigma_{ph}(m_\circ^{2})
		\label{(19.2)}
	\end{equation}
and
	\begin{equation}
		{\rm d} \rho_{1}(m_\circ^{2}) = Z_{2}\delta(m_\circ^{2}-m_{e}^{2}) 
		+ {\rm d} \sigma_{1}(m_\circ^{2}) \; , 
		\label{(20.2)}
	\end{equation}
with $m_{e}$ the renormalized electron mass, according to conventional notation. 

\goodbreak 
We have, by the general condition \eqref{(14.1)},
	\begin{align}
		0 \le Z_{3} < \infty \; , 
		\label{(19.3)}
		\\
		0 \le Z_{2} < \infty \; . 
		\label{(20.3)}
	\end{align}
When $Z_{3} > 0$, the renormalized electron charge follows from 
(\cite{Weinb1}, (10.4.18)). Assumption \eqref{(12)}, which is also expected 
to be generally true on physical grounds, becomes
	\begin{align}
		\langle \Omega, F_{\mu,\nu}(f) \Omega \rangle & = 0 
		\quad \forall f \in {\cal S} \; , 
		\label{(19.4)}
		\\
		\langle \Omega, \Psi_{\alpha}(f) \Omega \rangle & = 0 \quad 
		\forall  f \in {\cal S} \mbox{ and } \alpha 
		\text{  a spinor index} \, . 
		\label{(20.4)}
	\end{align}

In summary, Proposition~\ref{prop:1} provides a rigorous (non-perturbative) definition 
of the field-strength (or wave-function) renormalization constant. In the proof 
of Theorem~\ref{th:2} (\cite{RSII}, p.~70), the positive-definiteness condition 
e.) plays a major role. Thus, the definition of $Z$ and its range \eqref{(14.1)} (which 
depends on the positivity of the measure ${\rm d}\rho$) strongly hinge on the fact that 
the underlying Hilbert space has a positive metric. Parenthetically, the 
positive-definiteness condition on the Wightman functions is ``beyond the powers 
of perturbation theory'', as Steinmann aptly observes \cite{Steinmann1}.

\section{The Singularity Hypothesis}

If $0<Z<\infty$ \eqref{(14)} holds,  
the assumption of ETCR for the physical 
fields may be written (in the distributional sense)
		\begin{equation}
			\left[ \frac{\partial A_{phys}(x_{0},\vec{x}\,)}{\partial x_{0}},
			A_{phys}(x_{0},\vec{y}\,) \right] = -\frac{i}{Z} \; \delta(\vec{x}-\vec{y}\,) \; . 
			\label{(21)}
		\end{equation}
Together with \eqref{(9)} and \eqref{(18.1)}, \eqref{(21)} yields (see also \cite{Lehmann},
\cite{Kallen}, \cite{Barton1}, \cite{Weinb1})
	\begin{equation}
		\frac{1}{Z} = \int_{0}^{\infty} {\rm d}\rho(m_\circ^{2}) \; . 
		\label{(22)}
	\end{equation} 
Together with \eqref{(13)}, \eqref{(21)} would yield
	\begin{equation}
		\frac{1}{Z} = Z + \int_{0}^{\infty} {\rm d}\sigma(m_\circ^{2}) \; . 
		\label{(23.1)}
	\end{equation}
Since $d\sigma$ is a positive measure, we obtain from \eqref{(23.1)} the inequality 
	\begin{equation}
		Z \le 1
		\label{(23.2)}
	\end{equation}
(\cite{Weinb1}, p.~361, \cite{Barton1}, Equ.~(9.19)).

Formulas \eqref{(21)} and \eqref{(22)} have been extensively used as a heuristic guide, 
even by two of the great founders of axiomatic (or general) quantum 
field theory, Wightman and Haag, to substantiate the previously mentioned 
conjecture that $Z=0$ is expected to be a \emph{general} condition for interacting 
fields. One reason is that \eqref{(21)} suggests that, if $Z=0$ holds, the light-cone 
singularity of the two-point function of interacting fields is expected to 
be stronger than the one exhibited by the free field - we shall refer to 
this assumption, shortly, as the ``singularity hypothesis''. Indeed, 
in \cite{Wight}, p.~201, it is observed that ``$\int_{0}^{\infty} {\rm d}\rho(m_\circ^{2}) = \infty$ 
is what is usually meant by the statement that the field-strength renormalization 
is infinite''. This follows from \eqref{(22)}, with ``field-strength renormalization'' interpreted 
as $\frac{1}{Z}$. The connection with the singularity hypothesis comes 
next (\cite{Wight}, p.~201), with the observation that, by \eqref{(9)}, $W_{2}$ will 
have the same singularity, as $(x-y)^{2}=0$, as does $\Delta_{+}(x-y;m^{2})$ 
if $\int_{0}^{\infty} {\rm d}\rho(m_\circ^{2}) < \infty$. 

\section{Steinmann Scaling Degree}

In order to formulate the singularity hypothesis in rigorous terms, we recall the 
Steinmann scaling degree $sd$ of a distribution \cite{Steinmann}; for 
a distribution $u \in {\cal S}^{'}(\mathbb{R}^{n})$, let $u_{\lambda}$ denote 
the ``scaled distribution'', defined by 
	\[
		u_{\lambda}(f) \equiv u(f(\lambda^{-1} \cdot)) \; . 
	\]
As $\lambda \to 0$, we expect that $u_{\lambda} \approx \lambda^{-\omega}$ for 
some $\omega$, the ``degree of singularity'' of the distribution $u$. Hence, we set
	\begin{equation}
		sd(u) \equiv \inf \, \bigl\{\omega \in \mathbb{R} \mid \lim_{\lambda \to 0} 
		\lambda^{\omega} u_{\lambda} = 0 \bigr\} \; , 
		\label{(24.1)}
	\end{equation}
with the proviso that if there is no $\omega$ 
satisfying the limiting condition above, 
we set $sd(u) = \infty$. For the free scalar field of mass $m \ge 0$, it is 
straightforward to show from the explicit form of the two-point function in terms of 
modified Bessel functions that
	\begin{equation}
		sd(\Delta_{+}) = 2 \; . 
		\label{(24.2)}
	\end{equation}
In \eqref{(24.2)}, and the forthcoming equations, we omit the mass superscript. 
From Theorem~\ref{th:2}, we have that for $f \in {\cal S}(\mathbb{R}^{4})$ the 
interacting two-point function satisfies
        \begin{align}
		W_{+}(f) = \int_{0}^{\infty}{\rm d}\rho(m_\circ^{2})\int_{\mathbb{R}^{3}}  
		\frac{d\vec{p}}{\sqrt{{\vec{p}\,}^{2}+m_\circ^{2}}} \; 
		\tilde{f} \left(\sqrt{{\vec{p}\, }^{2}+m_\circ^{2}},\vec{p}\, ) \right) \; . 
		\label{(24.3)}
	\end{align}
Here $\tilde{f} \in {\cal S}(\mathbb{R}^{4})$ denotes the Fourier transform of $f$.  

\begin{definition}
\label{def:1} 
We say that the \emph{singularity hypothesis} holds for an interacting scalar field if 
	\begin{equation}
		sd(W_{+}) > 2 \; . 
		\label{(24.4)}
	\end{equation}
\end{definition}

\begin{proposition}
\label{prop:2} If the total spectral mass is finite, \emph{i.e.},
	\begin{equation}
		\int_{0}^{\infty} {\rm d}\rho(a^{2}) < \infty \; , 
		\label{(24.5)}
	\end{equation}
then
	\begin{equation}
		sd(W_{+}) \le 2 \; ; 
		\label{(24.6)}
	\end{equation}
\emph{i.e.}, the scaling degree of $W_{+}$ cannot be strictly greater than 
that of a free theory, and thus, by Definition~\ref{def:1}, the singularity 
hypothesis~\eqref{(24.4)} is \emph{not} satisfied.
\end{proposition}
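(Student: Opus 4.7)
The plan is to exploit the K\"all\'en--Lehmann representation \eqref{(9)} from Theorem~\ref{th:2}, which writes $W_+$ as a positive superposition, against the measure $d\rho$, of free two-point functions $\Delta_+^{m_0}$. Each of these has scaling degree $2$ by \eqref{(24.2)}, and the hypothesis \eqref{(24.5)} provides finite total mass. The idea is to show that the free-field short-distance behaviour survives the superposition, so that $W_+$ inherits the same scaling degree.

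My first step would be a scaling identity for $\Delta_+^{m_0}$. Starting from the momentum integral \eqref{(10)} and substituting $\vec k \mapsto \vec k/\lambda$ (with $\lambda > 0$) produces
\[
\Delta_+^{m_0}(\lambda x) \;=\; \lambda^{-2}\,\Delta_+^{\lambda m_0}(x),
\]
reflecting the fact that, in $1+3$ dimensions, $m_0$ carries length dimension $-1$ while $\Delta_+^{m_0}$ carries length dimension $-2$. Inserting this into \eqref{(9)} yields
\[
\lambda^{2}\,W_+(\lambda x) \;=\; \frac{1}{i}\int_{0}^{\infty} d\rho(m_0^{2})\,\Delta_+^{\lambda m_0}(x),
\]
so the problem reduces to taking $\lambda\to 0$ inside the $d\rho$-integral, where for each fixed $m_0 \ge 0$ the integrand converges distributionally to the massless free two-point function $\Delta_+^{0}(x)$.

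The central step, and the chief technical obstacle, is to justify this interchange of limit and integration. Testing against $f \in \mathcal S(\mathbb R^{4})$, one reads off from \eqref{(10)} the on-shell representation
\[
\bigl|\Delta_+^{a}[f]\bigr| \;\le\; \int_{\mathbb R^{3}} \frac{d^{3}\vec p}{2\sqrt{a^{2}+\vec p^{\,2}}}\,\bigl|\tilde f\bigl(\sqrt{a^{2}+\vec p^{\,2}},\vec p\,\bigr)\bigr| \;\le\; K(f),
\]
valid \emph{uniformly} in $a \ge 0$, with $K(f)$ depending only on finitely many Schwartz seminorms of $f$: the factor $(a^{2}+\vec p^{\,2})^{-1/2}\le |\vec p|^{-1}$ is locally integrable on $\mathbb R^{3}$, while the Schwartz decay of $\tilde f$ controls the tail. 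This uniform domination, combined with finite total mass, permits dominated convergence and delivers
\[
\lim_{\lambda\to 0}\,\lambda^{2}\,W_+(\lambda\,\cdot\,) \;=\; \tfrac{1}{i}\,\rho\bigl([0,\infty)\bigr)\,\Delta_+^{0} \qquad \text{in } \mathcal S'(\mathbb R^{4}).
\]

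Thus $W_+$ has, to leading order as $x\to 0$, the same scaling as the massless free two-point function $\Delta_+^{0}$, whose scaling degree is $2$ by \eqref{(24.2)}. In the language of \eqref{(24.1)} this forces $sd(W_+) \le 2$, the claimed inequality. The step vulnerable to failure if finite total mass is dropped is precisely the uniform-in-$m_0$ domination: without it, the $d\rho$-integral of $\Delta_+^{\lambda m_0}$ can diverge as $\lambda\to 0$, opening the door to the strictly stronger singularity \eqref{(24.4)}.
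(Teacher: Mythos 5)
Your proof is correct and is essentially the paper's own argument: both rescale the K\"all\'en--Lehmann representation to obtain $\lambda^{2}W_{+}(\lambda\,\cdot\,)(f)=\frac{1}{i}\int_{0}^{\infty}{\rm d}\rho(m_\circ^{2})\,\Delta_{+}^{\lambda m_\circ}(f)$ (this is exactly the bracketed term in \eqref{(24.7)}) and then invoke dominated convergence, with the uniform-in-mass bound coming from the Schwartz decay of $\tilde f$ and the integrability of $|\vec p\,|^{-1}$ near the origin, the finiteness of the total spectral mass \eqref{(24.5)} supplying the dominating constant. The only difference is presentational: you take the $\lambda\to 0$ limit directly and even identify it as $\rho([0,\infty))\,\Delta_{+}^{0}$ (a slight strengthening the paper does not need), whereas the paper runs the identical computation as a proof by contradiction against $sd(W_{+})=\omega_{0}>2$.
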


\begin{proof} 
The scaled distribution corresponding to $W_{+}$ is given by
	\begin{align}
		& W_{+,\lambda}(f) = \lambda^{-2}\int_{0}^{\infty}
		{\rm d}\rho(m_\circ^{2}) \int_{\mathbb{R}^{3}} 
		\frac{d\vec{p}}{\sqrt{{\vec{p}\,}^{2}+\lambda^{2}m_\circ^{2}}} \, 
		\tilde{f} \left( \sqrt{{\vec{p}\,}^{2}
		+\lambda^{2}m_\circ^{2}},\vec{p}\, \right) \; . 
		\label{(24.7)}
	\end{align} 
Assume the contrary to \eqref{(24.6)}, \emph{i.e.}, that $sd(W_{+})=\omega_{0}>2$. 
Then, by the definition of the $sd$, if $\omega < \omega_{0}$, one must have
	\begin{equation}
		\lim_{\lambda \to 0} \lambda^{\omega} W_{+,\lambda}(f) \ne 0 \; . 
		\label{(24.8)}
	\end{equation}
Choosing 
	\begin{equation}
		\omega = \omega_{0}-\delta >2
		\label{(24.9)}
	\end{equation}
in \eqref{(24.8)}, we obtain from \eqref{(24.7)} and \eqref{(24.8)} that
	\begin{align}
		\lim_{\lambda \to 0} \lambda^{\omega-2} 
		\left( \int_{0}^{\infty}{\rm d}\rho(m_\circ^{2}) \int_{\mathbb{R}^{3}}
		 \frac{d\vec{p} }{\sqrt{{\vec{p}\,}^{2}+\lambda^{2}m_\circ^{2}}}
		 \tilde{f}\left(\sqrt{{\vec{p}\,}^{2}+\lambda^{2}m_\circ^{2}},\vec{p}\, \right) 
		 \right)\ne 0 \; . 
		\label{(24.10)}
	\end{align}
The limit, as $\lambda \to 0$, of the term inside the brackets in \eqref{(24.10)}, 
is readily seen to be finite by the Lebesgue dominated convergence 
theorem due to the assumption \eqref{(24.5)} and the fact that 
$\tilde{f} \in {\cal S}(\mathbb{R}^{4})$; but this contradicts \eqref{(24.8)} 
because of \eqref{(24.9)}.
\end{proof}

\begin{remark}
\label{Remark-1} 
Proposition~\ref{prop:2} holds in dimension $n=2$ by the same proof. 
For dimension $n=1$, the proof shows that we must replace 
${\cal S}^{'}(\mathbb{R})$ by ${\cal S}^{'}_{0}(\mathbb{R}) 
\equiv \{f \in {\cal S}(\mathbb{R}) \mid \tilde{f}(0)=0\}$.
\end{remark}

Proposition~\ref{prop:2} makes Wightman's previously mentioned remark precise, 
and suggests, together with \eqref{(22)} the following expectation:

\paragraph{Conjecture 1} The singularity 
hypothesis (Definition~\ref{def:1}) implies 
 $Z= 0$; alternatively, $Z= 0$ is a necessary condition for the singularity hypothesis.

\bigskip
Conjecture 1 is also stated in slightly different 
words by Haag (\cite{Haag}, p.~55), 
who remarks: ``In the renormalized perturbation expansion one relates formally 
the true field $A_{phys}$ to the canonical field $A$ (our notation) which satisfies 
\eqref{(18.1)}, where $Z$ is a constant (in fact, zero). This means that the fields 
in an interacting theory are more singular objects than in the free theory, and we 
do not have the ETCR.''

\bigskip
If we use the rigorous definition of $Z$ (Proposition~\ref{prop:1}), we find:

\begin{theorem}
\label{thm:3} 
\quad 
\begin{itemize}
\item [$a.)$] If the ETCR is not assumed,  
then $sd(W_{+}) > 2$ is possible, no matter whether there is a discrete 
contribution to the K\"{a}ll\'{e}n-Lehmann  measure or not.
This means  Conjecture 1 is false; 
\item [$b.)$] If the ETCR is assumed, 
then $sd(W_{+}) > 2$ implies that there is no discrete contribution 
to the K\"{a}ll\'{e}n-Lehmann  measure. Hence Conjecture~1  is true.
\end{itemize}
\end{theorem}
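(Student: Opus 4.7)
The plan is to treat parts (a) and (b) separately, both relying on Proposition~\ref{prop:2} and the scaled-integral representation \eqref{(24.7)}.

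For part (a), I would exhibit an explicit K\"all\'en--Lehmann measure that simultaneously carries a discrete atom and satisfies the singularity hypothesis. A convenient ansatz is
$d\rho(m_\circ^2) = Z\,\delta(m_\circ^2-m^2) + \chi_{[1,\infty)}(m_\circ^2)\,(m_\circ^2)^{k-1}\,dm_\circ^2$
with $Z\ge 0$ and integer $k\ge 1$; the polynomial bound \eqref{(11)} is immediate. Substituting into \eqref{(24.7)} and rescaling the mass variable via $u = \lambda m_\circ$ converts the continuous part into
$\lambda^{-2k-2}\int_{\lambda^2}^{\infty}(u^2)^{k-1}\,J_u(f)\,du^2$,
where $J_u(f) := \int d\vec p/\sqrt{\vec p^{\,2}+u^2}\,\tilde f(\sqrt{\vec p^{\,2}+u^2},\vec p)$ decays rapidly in $u$ because $\tilde f \in {\cal S}$. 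By Lebesgue dominated convergence this integral converges, as $\lambda\to 0$, to a finite constant $C_k(f)$, strictly positive for test functions with $\tilde f\ge 0$ of suitable support. Since the discrete atom contributes only $Z\lambda^{-2}$ and is subdominant for $k\ge 1$, one obtains $W_{+,\lambda}(f)\sim C_k(f)\,\lambda^{-2k-2}$, whence $sd(W_+) = 2k+2 > 2$. Taking $Z>0$ refutes Conjecture~1; the case $Z=0$ shows that the phenomenon persists without a discrete atom. This proves (a).

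For part (b), the argument is short. Assuming the ETCR \eqref{(21)}, one combines it with the K\"all\'en--Lehmann representation \eqref{(9)} and the field normalization \eqref{(18.1)} to derive the sum rule \eqref{(22)}, namely $1/Z = \int_0^\infty d\rho(m_\circ^2)$. If a discrete contribution is present then by \eqref{(14)} one has $Z>0$, so $\int_0^\infty d\rho(m_\circ^2) = 1/Z < \infty$. Proposition~\ref{prop:2} now yields $sd(W_+)\le 2$. Contrapositively, under the ETCR the singularity hypothesis $sd(W_+) > 2$ forces the discrete contribution to be absent, which is precisely Conjecture~1.

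The only real technical step is in part (a): one must certify that $C_k(f)$ is genuinely nonzero, so that $sd(W_+)$ equals $2k+2$ rather than being merely bounded above by it. Choosing $\tilde f$ non-negative with support bounded away from the origin makes $J_u(f)$ a strictly positive Schwartz function of $u$, so $C_k(f) > 0$ is automatic. Everything else amounts to routine bookkeeping once the scaling convention for the Steinmann degree is fixed.
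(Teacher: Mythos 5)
Your proposal is correct, and it splits cleanly into a part that mirrors the paper and a part that genuinely improves on it. For b.) you follow exactly the paper's route: a discrete atom forces $0<Z<\infty$ via Proposition~\ref{prop:1}, the ETCR sum rule \eqref{(22)} then gives $\int_{0}^{\infty}{\rm d}\rho(m_\circ^{2})=1/Z<\infty$, and Proposition~\ref{prop:2} yields $sd(W_{+})\le 2$; the contrapositive is the assertion. For a.), however, the paper's own proof is only a one-line compatibility remark --- by \eqref{(13)}, $\int_{0}^{\infty}{\rm d}\rho(m_\circ^{2})=\infty$ is consistent with any $Z$ in $[0,\infty)$ whenever $\int_{0}^{\infty}{\rm d}\sigma(m_\circ^{2})=\infty$ --- and it defers the actual example to a subsequent remark (a generalized free field with non-integrable mass spectrum plus a mass-shell delta, credited to Seiler), without ever verifying that $sd(W_{+})>2$ holds there. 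That verification is genuinely needed, because Proposition~\ref{prop:2} only gives one implication: infinite spectral mass does \emph{not} by itself entail the singularity hypothesis (for instance ${\rm d}\sigma(t)=\chi_{[1,\infty)}(t)\,t^{-1}\,{\rm d}t$ has infinite total mass but produces only a logarithmic enhancement of the scaled distribution, so $sd(W_{+})=2$). Your explicit power-law measure, with the substitution $u=\lambda m_\circ$ in \eqref{(24.7)}, computes $sd(W_{+})=2k+2>2$ while keeping $Z>0$, which is precisely the step the paper leaves implicit; in this sense your version of a.) closes a small gap rather than merely reproducing the argument. Two cosmetic points: to conclude that Conjecture~1 fails for an actual theory you should add that your measure is realized as the K\"all\'en--Lehmann measure of a generalized free field, so that a full Wightman theory with this two-point function exists; and for $C_{k}(f)>0$ the relevant condition is not that the support of $\tilde{f}\ge 0$ stays away from the origin, but that it meets the open forward cone $\{p \mid p_{0}>|\vec{p}\,|\}$, so that the mass shells sweep through it for $u$ in an interval of positive length --- ``strictly positive for all $u$'' is more than you need and more than your support condition delivers.
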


\goodbreak
\begin{proof} 
\quad
\begin{itemize}
\item [$a.)$] By \eqref{(13)}, $\int_{0}^{\infty} {\rm d}\rho(m_\circ^{2}) = \infty$ holds whatever 
value of $Z$ satisfying \eqref{(14.1)} is chosen --- in particular, $Z=0$ --- whenever 
$\int_{0}^{\infty} d\sigma(m_\circ^{2}) = \infty$; 
\item [$b.)$] If the ETCR is assumed, \eqref{(22)} holds 
whenever $0<Z<\infty$, in which case $\int_{0}^{\infty} {\rm d}\rho(m_\circ^{2}) < \infty$. Thus, 
by \eqref{(14.1)}, $Z=0$ is the only possibility to 
render $\int_{0}^{\infty} {\rm d}\rho(m_\circ^{2}) = \infty$. 
\end{itemize}
\end{proof}

\begin{remark}
One explicit example of $a.)$~above is furnished by a generalised free field 
with a non-integrable mass spectrum but a delta function at a mass shell. 
We thank Erhard Seiler for this remark.
\end{remark}

\begin{remark}
For $Z=0$ the r.h.s. of (35) is mathematically meaningless, which 
is an indication that the ETCR may
not hold (although, of course, no proof of this fact). We did not, 
however, assume $Z=0$ in b.) above, but
rather the ETCR (35) for $0<Z<\infty$, which is the only mathematically 
meaningful statement. This assumption 
yields (36) --- which is \emph{not} generally valid --- from which, together with (22), 
it follows that
the excluded value $Z=0$ is the only one which may lead to infinite 
spectral mass. There is,therefore,
no self-contradiction in statement b.).

On the other hand, for $Z=0$, nothing can be really said about (36)!. 
This is carefully demonstrated in page (863) of
Wightman's famous article \cite{Wight3}, the same in which he 
introduced the Wightman axioms: the
$''\infty \times \delta''$ term is definitely misleading (see \cite{Wight3}, 
top of page 864). But, if such
is the case, what is the basis of the singularity hypothesis?

Actually there is independent reason to believe that the singularity 
hypothesis is generally true in relativistic
quantum field theories, from the (up to the present nonrigorous) 
theory of the renormalization group: even 
in the (believed to be) asymptotic free quantum chromodynamics, 
the critical exponents remain anomalous
(see \cite{Weinb2} and references given there, and the paper of 
Symanzik on the small-distance behaviour analysis
\cite{Sym}).
\end{remark} 

The hypothesis of ETCR has been in serious doubt for a long time, see, 
\emph{e.g.}, the remarks in \cite{StreWight}, p.~101. Its validity has been 
tested \cite{WFW} in a large class of models in two-dimensional space-time; 
the Thirring model \cite{Thirrm}, the Schroer model \cite{Schrm}, the 
Thirring-Wess model of vector mesons interacting with zero-mass 
fermions (see \cite{ThirrWessm}, \cite{DubTar}), and the Schwinger model \cite{Schw}, 
using, throughout, the formulation of Klaiber \cite{Klai} for the Thirring model, 
and its extension to the other models by Lowenstein and Swieca \cite{LSwi} - 
for the Schwinger model, the previously mentioned noncovariant gauge ``$\alpha = \sqrt{\pi}$'' 
was adopted. Except for the Schwinger model, whose special canonical structure is due to 
its equivalence (in an irreducible sector) to a theory of a free scalar field of positive mass, 
the quantity
	\begin{equation}
		\{ \psi(x),\psi(y) \} - \langle \Omega,\{\psi(x),\psi(y)\}\Omega \rangle   
		 \cdot \mathbb{1} \; , 
	\end{equation}
where the $\psi$'s are the interacting fermi fields in the models
and $\{ \, . \, , \, . \,  \}$ denotes 
the anti-commutator and $\Omega$ denotes the vacuum, do not exist in the equal 
time limit as operator-valued distributions, for a certain range of coupling constants. 
Two different definitions of the equal time limit were used and compared, one of them 
due to Schroer and Stichel \cite{SchrSti}. 

Thus, the ETCR is definitely not true in general. In perturbation theory, 
$Z_{3}(\Lambda)$ (\cite{Weinb1}, p.~462) satisfies  $Z \le 1$ (see \eqref{(23.2)}) 
for all ultraviolet 
cutoffs~$\Lambda$, but it is just this condition which relies on the ETCR assumption 
and is not expected to be generally valid. In the limit $\Lambda \to \infty$, however, 
$Z_{3}(\Lambda)$  tends to~$- \infty$ and hence
violates \eqref{(14)} maximally.  In fact, 
\eqref{(14)} is violated  even for finite, sufficiently large $\Lambda$.

The models also provide examples of the validity of the singularity hypothesis (for the currents, 
analogous assertions hold if the commutator is used in place of the anti-commutator). 

By a.)~of 
Proposition~\ref{prop:2} it follows that $Z= 0$ (see \eqref{(17)}) need not be valid, 
even if the singularity hypothesis is valid. The question may now be posed: 
what is then the physical meaning of $Z= 0$, or, alternatively: 
when is $Z= 0$ valid?

In order to try to answer this question, we recall that, in the presence of massless 
particles (photons), Buchholz \cite{Buch} used Gauss' law to show, in a beautiful
paper, that the discrete spectrum of the mass operator
	\begin{equation}
		P_{\sigma} P^{\sigma} = M^{2} = P_{0}^{2}-\vec{P}^{2}
		\label{(25)}
	\end{equation}
is empty. Above, $P^{0}$ is the generator of time translations in the physical 
representation, \emph{i.e.}, the physical hamiltonian $H$, and $\vec{P}$ is the physical 
momentum. This fact is interpreted as a confirmation of the phenomenon that 
particles carrying an electric charge are accompanied by clouds of soft photons.

Buchholz begins by formulating adequate assumptions, viz., given that one wishes to 
determine the electric charge of a physical state $\Phi$ with the help of Gauss' law, 
the space-like asymptotic electromagnetic field of this state must i.) be measurable 
and ii) with sufficient precision. Let
	\begin{equation}
		F_{\mu,\nu}(\phi_{R}) \equiv \int d^{4}x \; \frac{\phi(x/R)}{R^{2}} \, F_{\mu,\nu}(x) \; . 
		\label{(26)}
	\end{equation}
Here, $\phi$ is an arbitrary real test function with compact support in the space-like 
complement of the origin in Minkowski space and $R>0$  is 
a scaling parameter. As $R$ 
increases the electromagnetic field in \eqref{(26)} is averaged over regions whose diameter 
and space-like distance from the origin grows like $R$; this average is 
rescaled by the scaling dimension of the field. He expresses i.) for a state $\Phi$ in the form
	\begin{equation}
		\lim_{R \to \infty} \langle \Phi, F_{\mu,\nu}(\phi_{R}) \Phi \rangle = f_{\mu,\nu}(\phi) \; ; 
		\label{(27)}
	\end{equation}
and ii.) as the boundedness of the mean square deviation of \eqref{(26)}:
	\begin{equation}
		\limsup_{R \to \infty} \; \bigl\| \bigl( F_{\mu,\nu}(\phi_{R})
		-f_{\mu,\nu}(\phi) \mathbb{1} \bigr)\Phi \bigr\|^{2}<\infty \; . 
		\label{(28)}
	\end{equation}
It is also assumed that, for the states $\Phi$, Gauss' law 
		\begin{equation}
			\langle \Phi, j_{\mu} \Phi \rangle= \;
                        \langle \Phi, \partial^{\nu}F_{\nu,\mu} \Phi \rangle
			\label{(29)}
		\end{equation}
holds in the sense of distributions on ${\cal S}(\mathbf{R}^{n})$, which allows 
to determine the electric charge of $\Phi$ by
	\begin{equation}
		\langle \Phi, Q \Phi \rangle = \lim_{R \to \infty} \int d^{4}x \, \frac{\chi(x/R)}{R} \;  
		\langle \Phi,j_{0}(x)\Phi \rangle =f_{i0}(\partial^{i}\chi) \; , 
		\label{(30)}
	\end{equation}
where $f_{\mu,\nu}$ is the functional \eqref{(26)} and $\chi$ is any test function, whose 
spatial derivatives $\partial^{i}\chi$ have support in the region $\{x \mid x^{2} < 0\}$ 
and which is normalized such that $\int d^{4}x \, \chi(x)\delta^{3}(x) = 1$. There is 
a further technical assumption
in Buchholz's paper, in order to avoid domain questions, for which we must refer to his paper 
(see his footnote 1), but omit from the statement of the following result,
which we chose to formulate as a theorem due to the central role it plays in our considerations.
We adjourn, however, the discussion of the 
applicability of the assumptions to concrete models as $qed_{1+3}$. 

\begin{theorem}[Buchholz~\cite{Buch}]
\label{th:4} Let $\Phi$ be a state in the complement of the vacuum 
satisfying Gauss' law in the sense of \eqref{(29)} 
as well as \eqref{(27)} and~\eqref{(28)}. Suppose, in addition, that 
        \begin{equation}
		P_{\sigma}P^{\sigma} \Phi = m^{2} \Phi
		\label{(32)}
	\end{equation}
for some $m^{2} \ge 0$, then
	\begin{equation}
		f_{\mu,\nu} = 0 \; ; 
		\label{(33)}
	\end{equation}
\emph{i.e.},  according to  \eqref{(26)} the state $\Phi$ is chargeless. 
\end{theorem}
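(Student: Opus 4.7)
The plan is to derive $f_{\mu,\nu}(\phi) = 0$ for every admissible test function $\phi$ by confronting the asymptotic conditions \eqref{(27)}--\eqref{(28)} with the sharp mass-shell constraint~\eqref{(32)}; once $f_{\mu,\nu}\equiv 0$, the chargelessness \eqref{(33)} of $\Phi$ follows immediately from \eqref{(30)}.

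First I would introduce the fluctuation vectors $\Psi_R := F_{\mu,\nu}(\phi_R)\Phi - f_{\mu,\nu}(\phi)\Phi$. Hypothesis \eqref{(27)} forces $\langle \Phi,\Psi_R\rangle \to 0$, while \eqref{(28)} keeps $\|\Psi_R\|$ uniformly bounded, so $\Psi_R$ has weak limit points orthogonal to $\Phi$. Next I would exploit translation covariance: for $a\in \mathbb{R}^{1+3}$ one has $U(a)F_{\mu,\nu}(\phi_R)U(-a) = F_{\mu,\nu}(\phi_R^{(a)})$ with $\phi_R^{(a)}(x) = R^{-2}\phi\bigl((x-a)/R\bigr)$. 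For any fixed $a$, the shift is negligible relative to the scale $R$, so the limit \eqref{(27)} applied to $\phi_R^{(a)}$ reproduces the same functional $f_{\mu,\nu}(\phi)$. Combined with the spacelike support of $\phi$ and locality of $F_{\mu,\nu}$, this shows that for $R$ large and $a$ in a bounded set the operators $F_{\mu,\nu}(\phi_R^{(a)})$ and $F_{\mu,\nu}(\phi_R)$ commute, and that widely separated spacelike translates of $F_{\mu,\nu}(\phi_R)$ effectively decorrelate on~$\Phi$.

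The decisive step invokes the mass-shell constraint. The relation $P_{\sigma}P^{\sigma}\Phi = m^{2}\Phi$ means that $a\mapsto\langle\Phi, U(a)\Phi\rangle$ is the Fourier transform of a measure concentrated on the single hyperboloid $\{p : p^{2} = m^{2},\ p_{0}\ge 0\}$. I would average the measurement over $n$ mutually spacelike-separated translates $a_{1},\dots,a_{n}$ and examine the vector $\Theta_{R,n} := \sum_{k=1}^{n} F_{\mu,\nu}(\phi_{R}^{(a_{k})})\Phi - n\,f_{\mu,\nu}(\phi)\,\Phi$. The decorrelation estimate together with \eqref{(28)} controls $\|\Theta_{R,n}\|$ by $O(\sqrt{n})$, whereas the coherent contribution $n\,f_{\mu,\nu}(\phi)\Phi$ grows linearly; matching the two as $n\to\infty$ forces $f_{\mu,\nu}(\phi)=0$. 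Inserting this into \eqref{(30)} produces $\langle\Phi, Q\Phi\rangle = f_{i0}(\partial^{i}\chi) = 0$, which is the content of~\eqref{(33)}.

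The main obstacle is the decorrelation estimate when $m=0$. For $m>0$, spacelike clustering of $\langle\Phi,U(a)\Phi\rangle$ along spacelike directions follows from Riemann--Lebesgue on the mass hyperboloid; for $m=0$ the spectral measure may in principle carry an atom at the tip of the forward light cone, and one must use that $\Phi$ lies in the complement of the vacuum, together with the uniqueness of $\Omega$ in ${\cal H}$, to exclude this atomic component. A secondary technical difficulty is the domain issue alluded to in Buchholz's footnote~1, needed to make rigorous the commutator manipulations of unbounded field operators on the non-invariant vector $\Phi$ when translating $F_{\mu,\nu}(\phi_R)$ into spacelike-disjoint regions.
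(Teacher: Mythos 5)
The paper itself offers no proof of Theorem~\ref{th:4}: it is imported from Buchholz~\cite{Buch}, and the only guidance given is the Remark that follows it, which identifies Maison's theorem~\cite{Maison} --- absolute continuity of the joint energy-momentum spectrum on any carrier subspace of a mass-$m\ge 0$ irreducible representation orthogonal to the vacuum --- as the ingredient that ``plays a major role''. Your preliminaries (uniformly bounded fluctuation vectors via \eqref{(28)}, weak limit points orthogonal to $\Phi$ via \eqref{(27)}, translation covariance with relative shifts $a/R \to 0$, locality of $F_{\mu,\nu}(\phi_R)$ against fixed local operators) are all consistent with Buchholz's strategy. But your decisive step is a non sequitur. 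The vector $\Theta_{R,n}$ is \emph{by construction} the sum of fluctuations around the mean, so a bound $\|\Theta_{R,n}\| = O(\sqrt{n})$ is just the expected central-limit behaviour of decorrelated measurements and is compatible with \emph{every} value of $f_{\mu,\nu}(\phi)$; to obtain a contradiction you would need an independent bound on $\bigl\| \sum_{k} F_{\mu,\nu}(\phi_R^{(a_k)})\Phi \bigr\|$ that is sublinear in $n$, and none is available --- each summand is merely $O(1)$, giving the trivial $O(n)$ bound that matches the coherent part $n f_{\mu,\nu}(\phi)\Phi$ exactly. Moreover, the decorrelation itself fails in the only regime where \eqref{(27)}--\eqref{(28)} apply: for fixed $a_k$ and $R \to \infty$ the translates $\phi_R^{(a_k)}$ converge to $\phi_R$, so the $n$ measurements become asymptotically \emph{the same} measurement (perfectly correlated, not independent), whereas if the separations grow with the scale, $a_k = R b_k$, then $\phi_R^{(a_k)} = (\phi^{(b_k)})_R$ and \eqref{(27)} returns $f_{\mu,\nu}(\phi^{(b_k)})$ --- the asymptotic field in a different spacelike direction --- so the coherent sum is no longer $n\, f_{\mu,\nu}(\phi)$ and the bookkeeping collapses.

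There is also a sanity check showing the approach cannot be repaired as stated: you use the hypothesis \eqref{(32)} only through Riemann--Lebesgue decay of $a \mapsto \langle \Phi, U(a)\Phi\rangle$, a property shared by many states with absolutely continuous energy-momentum spectrum that are \emph{not} mass eigenstates. If your counting argument were valid, it would give $f_{\mu,\nu} \equiv 0$ for all such states satisfying \eqref{(27)}--\eqref{(28)}, i.e., it would exclude charged states altogether --- contradicting the very point of Buchholz's theorem, namely that charged states exist and satisfy \eqref{(27)}--\eqref{(28)} with $f_{\mu,\nu} \ne 0$, but then cannot be eigenstates of $P_\sigma P^\sigma$. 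The sharp mass shell must enter in a more specific way: in Buchholz's argument it does so through the conjunction of Maison's absolute continuity with the fact that $F_{\mu,\nu}(\phi_R)$ transfers energy-momentum only of order $1/R$ (since $\widetilde{\phi_R}(p) = R^2 \widetilde{\phi}(Rp)$), so that matrix elements of these operators between vectors whose spectral measures are concentrated on, and absolutely continuous along, the \emph{same} hyperboloid vanish in the limit; that is what kills $f_{\mu,\nu}(\phi) = \lim_{R} \langle \Phi, F_{\mu,\nu}(\phi_R)\Phi\rangle$. Your observation about a possible atom at the tip of the light cone when $m=0$, handled by $\Phi$ lying in the complement of the vacuum, is correct but peripheral; the missing idea is this spectral-overlap / absolute-continuity mechanism, not a law-of-large-numbers estimate.
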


\begin{remark}
In the above theorem, the fact that, in a Poincar\'e invariant quantum 
(field) theory, the joint energy-momentum spectrum restricted to any
carrier subspace of an irreducible representation of the (restricted) 
Poincar\'e group of mass $m \ge 0$ (lying therefore in the orthogonal 
complement of the vacuum vector), is absolutely continuous, plays a major
role. This fact was proved by Maison~\cite{Maison} in an important, but today 
somewhat forgotten paper. An analogous statement is proved to hold 
for a Galilei-invariant quantum (field) theory in the same reference.
The aforementioned results are a consequence of special properties of
the irreducible representations of the Poincar\'e and Galilei gropups,
and are therefore independent of locality. 
\end{remark}

When endeavouring to apply Theorem~\ref{th:4} to concrete models such as $qed_{1+3}$, 
problems similar to those occurring in connection with the charge superselection 
rule \cite{StrWight} arise. The most obvious one is that Gauss' law 
\eqref{(29)} is only expected to be 
valid (as an operator equation in the distributional sense) in non-covariant 
gauges (see \eqref{(8)}), the Coulomb gauge in the case of $qed_{1+3}$,  but not in 
covariant gauges \cite{StrWight}. We adopted, however, the option of  staying
with the Coulomb gauge and defining the theory in terms of  the 
$n$-point Wightman functions of observable  fields, 
\emph{i.e.}, gauge-invariant fields, thus maintaining 
Hilbert-space positivity. Within this framework, the hypotheses of Theorem~\ref{th:4} are 
in consonance with the requirements of Wightman's theory~\cite{StreWight}, 
and Buchholz's theorem should be applicable to $qed_{1+3}$. 
Recalling~\eqref{(19.1)} and~\eqref{(20.1)}, we also have the following result: 

\begin{corollary}
\label{cor:3} 
The assertion of Theorem~\ref{th:4} is equivalent to the assertion of the following conditions:
	\begin{equation}
		Z_{3} = 0
		\label{(40)}
	\end{equation}
and
	\begin{equation}
		Z_{2} = 0 \; . 
		\label{(41)}
	\end{equation}
\end{corollary}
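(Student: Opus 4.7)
The plan is to translate the vanishing of $Z_3$ and $Z_2$ into the non-existence of one-particle sub-representations of the Poincar\'e group inside the cyclic subspaces generated by $F_{\mu,\nu}$ and $\Psi$, and then to read Theorem~\ref{th:4} as the Hilbert-space statement of this non-existence. By the representations \eqref{(19.1)}--\eqref{(19.2)} and \eqref{(20.1)}--\eqref{(20.2)}, together with Wightman reconstruction, $Z_3 > 0$ is equivalent to the existence of a non-trivial subspace $\mathcal{H}_\gamma \subset \mathcal{H}$, lying in the complement of $\Omega$ by~\eqref{(19.4)}, on which $P_\sigma P^\sigma = 0$, obtained as the projection of $F_{\mu,\nu}(\mathcal{S})\Omega$ onto the massless shell. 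Analogously, $Z_2 > 0$ is equivalent to a non-trivial one-electron subspace $\mathcal{H}_e \subset \mathcal{H}$ with $P_\sigma P^\sigma = m_e^2$, sitting in the charge-$\pm e$ sector of the cyclic closure of $\Psi(\mathcal{S})\Omega$.

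First I would treat the electron. For any $\Phi \in \mathcal{H}_e$ the hypotheses of Theorem~\ref{th:4} are met: Gauss' law \eqref{(8)} is an operator identity in the Coulomb gauge, and the mass-shell relation is the defining property of $\mathcal{H}_e$. The theorem yields $f_{\mu,\nu}(\Phi) = 0$, and inserting this into the charge formula~\eqref{(30)} gives $\langle \Phi, Q\Phi\rangle = 0$, contradicting $\langle \Phi, Q\Phi\rangle = e \|\Phi\|^{2} \neq 0$ on the electron sector; hence $\mathcal{H}_e$ is trivial and $Z_2 = 0$. For the photon I would apply Theorem~\ref{th:4} to $\Phi \in \mathcal{H}_\gamma$ (mass $m=0$) and extract $Z_3 = 0$ by expanding $\langle \Phi, F_{\mu,\nu}(\phi_R)\Phi\rangle$ in the scaling limit via the free-field matrix elements on $\mathcal{H}_\gamma$ --- the residue of the discrete mass shell at $m_\circ^2 = 0$ enters $f_{\mu,\nu}$ multiplicatively as $Z_3$, so vanishing of $f_{\mu,\nu}$ throughout $\mathcal{H}_\gamma$ forces $Z_3 = 0$. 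The reverse implication is immediate: $Z_3 = Z_2 = 0$ eliminates every candidate mass eigenstate inside the cyclic subspaces of $F_{\mu,\nu}$ and $\Psi$, so the conclusion of Theorem~\ref{th:4} has nothing to falsify on precisely those subspaces where the two-point functions carry discrete spectral weight.

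The main obstacle is the photon step. Buchholz's theorem returns a statement about electric charge, which is automatically zero on $\mathcal{H}_\gamma$, so no bare contradiction is available. The delicate point is to verify that the scaling limit defining $f_{\mu,\nu}$ actually detects the $Z_3$-weight of the discrete contribution, or equivalently that suitable wave packets in $\mathcal{H}_\gamma$ realise $f_{\mu,\nu} \neq 0$ whenever $Z_3 > 0$. Here one must also check the auxiliary hypotheses \eqref{(27)} and \eqref{(28)} --- measurability and bounded mean-square deviation of the asymptotic electromagnetic field --- on one-photon wave packets. Granting these, the free-field structure of the matrix elements of $F_{\mu,\nu}$ on $\mathcal{H}_\gamma$, inherited from the K\"all\'en--Lehmann delta at $m_\circ^2 = 0$, should yield strict positivity of $|f_{\mu,\nu}(\Phi)|$ for some $\Phi$, closing the equivalence.
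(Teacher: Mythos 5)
The paper itself offers no proof of Corollary~\ref{cor:3}: the corollary is presented as an immediate translation, via the standard spectral dictionary, of Theorem~\ref{th:4} --- read by the authors as the statement preceding \eqref{(25)}, that the discrete spectrum of the mass operator is empty in the orthogonal complement of the vacuum --- into the vanishing of the discrete weights $Z_{3}\,\delta(m_\circ^{2})$ and $Z_{2}\,\delta(m_\circ^{2}-m_{e}^{2})$ in the K\"all\'en--Lehmann representations \eqref{(19.2)} and \eqref{(20.2)}. Your electron half is in the spirit of that reading: a discrete shell at $m_{e}^{2}$ with $Z_{2}>0$ would supply charged mass eigenstates, to which Theorem~\ref{th:4} (granting, as the paper does, the applicability of \eqref{(27)}--\eqref{(29)} to such states) assigns charge zero via \eqref{(30)}, a contradiction. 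That is the intended infraparticle contrapositive, and it is sound at the paper's level of rigor.

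The photon half, however, contains a genuine gap, and the mechanism you propose to close it is demonstrably wrong. You claim that on one-photon states the scaling limit $f_{\mu,\nu}$ carries the weight $Z_{3}$ multiplicatively, so that the conclusion $f_{\mu,\nu}=0$ of Theorem~\ref{th:4} would force $Z_{3}=0$. The free Maxwell field is a counterexample: there $Z_{3}=1$, every normalizable state is chargeless, and yet $f_{\mu,\nu}(\phi)=0$ on all one-photon wave packets, because the mean field $\langle \Phi, F_{\mu,\nu}(x)\Phi\rangle$ of a dispersing packet decays at spacelike infinity faster than any power (no stationary phase in spacelike directions), hence much faster than the Coulomb-type $R^{-2}$ falloff that the averaging \eqref{(26)} is calibrated to detect. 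By \eqref{(30)}, $f_{\mu,\nu}$ measures the asymptotic flux, \emph{i.e.}, the charge, and is therefore blind to $Z_{3}$: applying Theorem~\ref{th:4} on $\mathcal{H}_{\gamma}$ yields a conclusion that is automatically satisfied whatever the value of $Z_{3}$, so no contradiction is available along your route --- as you yourself suspected. The route consistent with the paper is not through the chargeless photon states at all, but through the authors' global reading of Buchholz's result as emptiness of the discrete mass spectrum, which removes the point mass at $m_\circ^{2}=0$ in \eqref{(19.2)} by the same spectral dictionary that handles \eqref{(20.2)}. Note finally that your converse direction (``nothing to falsify'') also does not deliver the asserted equivalence: $Z_{2}=Z_{3}=0$ only excludes point spectrum in the one-field cyclic subspaces generated by $F_{\mu,\nu}(f)\Omega$ and by the fields in \eqref{(20.1)}, not in the full orthogonal complement of the vacuum, where states such as $F_{\mu,\nu}(f)F_{\rho,\sigma}(g)\Omega$ could still carry discrete mass spectrum --- though here the paper's own unproved claim of ``equivalence'' is no more precise than yours.
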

  
It is interesting to recall, in connection with Corollary~\ref{cor:3}, that in \cite{JaWre2}, 
\emph{both} the photon field  and the electron-positron field are ``dressed''.

\section{Weinberg's criterion $Z=0$ for composite (unstable) particles}

We now turn to a different but, surprisingly perhaps, related subject, that 
of \emph{composite} or \emph{unstable particles}.  Classically speaking, these are 
particles whose field does not appear in the 
Lagrangian (\cite{Weinb1}, p.~461, \cite{Weinb4}). 

We follow \cite{Weinb1}, p.~461, but with some modifications: 
1) Weinberg formulated his criterion in terms of functional integrals, we prefer to formulate it 
in terms of classical field theory; 2)~some details are somewhat simplified and / or stated 
in a different form.

Turning to scalar fields for simplicity, we consider the case of a scalar particle $C$, 
of mass $m_{C}$, which may decay into a set of two (for simplicity) stable particles, each 
of mass $m$. We have energy conservation in the rest frame of $C$,  \emph{i.e.}, 
	\[
		m_{C}=\sum_{i=1}^{2} \sqrt{{\vec{q}_{i}\,}^{2}+m_{i}^{2}} 
		\ge \sum_{i=1}^{2} m_{i} \; , 
	\]
with $m_{i}=m,i=1,2$, and $\vec{q}_{i}$  the 
momenta of the two particles in the rest frame of $C$:
	\begin{equation}
		m_{C} > 2m  \; . 
		\label{(42)}
\end{equation}

We now assume that once can start from a classical Lagrangian field theory
and use canonical quantization to construct
a quantum field theory of an unstable (composite) particle $C$ with cutoffs; 
and that the limit exists, when these cutoffs are removed. 

\bigskip

\textbf{Assumption A} Proposition 3.2, together with (22), remain valid for the quantum field $C$ of a
composite (unstable) particle. The classical limit of the theory is assumed to yield a classical
scalar field, with a free Lagrangian density corresponding to mass $m_{C}$. Equation (24) is 
assumed to define the physical fields and their respective classical limits whenever $0<Z<\infty$.

\bigskip

The classical limit of quantum fields is reviewed in Duncan's monograph \cite{Duncan}.
As in the massless case, (24) does not define physical fields if $Z=0$: they have to be defined
differently in that case (see the remarks in 
the conclusion the massless case). In the unstable
case, see the remarks by Landsman in \cite{Landsman}, pg.~157: 
the LSZ condition $C \to 0$ seem to 
suggest that the $C$ particle ``dissipates into nothingness''. In particular, 
(24) is nonsensical
when $Z_{C}=0$, because it suggests that, either $C \equiv 0$ 
when $C_{ren}$ is well-defined, or
that it is inconclusive otherwise.    

We now turn to the classical picture of an unstable particle of mass $m$, described by a scalar field $A(x)$, 
and let us assume that the field $C(x)$ of particle $C$ is a functional of the 
field $A(x)$, \emph{i.e.},
	\begin{equation}
		C(x) = F(A(x)) \; , 
		\label{(43)}
	\end{equation}
with $F$ some, for the present purpose, unspecified function (which must satisfy some 
regularity conditions; see  below). 
\eqref{(43)} expresses the fact that the theory is equivalent to one 
in which $C(x)$ does not appear, \emph{i.e.}, it represents a particle composed of two 
stable $A$ particles, the latter's field $A(x)$ being the only fundamental constituent of the 
Lagrangian density. In the above argument, the fields in \eqref{(43)} are the physical (asymptotic) 
fields, which define the particle structure in the classical limit, in conformance with the Assumption.

\goodbreak
The equations \eqref{(43)} are the Euler-Lagrange equations associated to the 
classical Lagrangian density
	\begin{equation}
		{\cal L}_{I}(x) \equiv \frac{1}{2} \Bigl( C(x)-F(A(x)) \Bigr)^{2} \; . 
		\label{(44)}
	\end{equation}
In agreement with the assumption, the free Lagrangian density ${\cal L}_{0}$ for a scalar field 
is defined in terms of 
a ``bare'' or unrenormalized field $C_{0}$, given by
	\begin{equation}
		C_{0}(x) = Z_{C}^{1/2} C(x)
		\label{(45)}
		\quad \text{where} \quad 
			0 < Z_{C} < \infty \; . 
	\end{equation}
We have, therefore,
	\begin{equation}
		{\cal L}_{0}(x) = \frac{1}{2}\, \partial_{\mu}C_{0}(x)\partial^{\mu} \, C_{0}(x)
		-\frac{1}{2} \, m_{C}^{2} \, C_{0}(x)^{2} \; . 
		\label{(46)}
	\end{equation}
Equ.~\eqref{(43)} motivates the proposal of the following Ansatz for 
the full classical Lagrangian density associated to particle $C$:
	\begin{equation}
		{\cal L}(x) = {\cal L}_{0}(x) + {\cal L}_{I}(x) \; . 
		\label{(47)}
	\end{equation}
By \eqref{(45)}, the Euler-Lagrange equations for ${\cal L}$ are
	\begin{equation}
		Z_{C}^{1/2} (\square \, C(x) +m_{C}^{2} \, C(x))-
		\Bigl(C(x)-F(A(x)) \Bigr)=0 \; . 
		\label{(48)}
	\end{equation}
We have now the following result:

\begin{proposition}
\label{prop:3}
Let $A$ and $C$ belong to $C^{1}(\mathbb{R}^{4};\mathbb{R})$ and 
$F \in C^{1}(\mathbb{R};\mathbb{R})$.
If, furthermore, the Assumption A holds, then a composite field $C(\, . \, )$ 
of the form~\eqref{(43)} is a free field of mass $m_{C}$.
\end{proposition}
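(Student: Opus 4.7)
The plan is to read off the conclusion directly from the Euler--Lagrange equation \eqref{(48)} once the constitutive relation \eqref{(43)} is imposed. Concretely, \eqref{(48)} reads
\[
Z_C^{1/2}\bigl(\square C(x) + m_C^{2} C(x)\bigr) - \bigl(C(x) - F(A(x))\bigr) = 0,
\]
and \eqref{(43)} asserts $C(x) - F(A(x)) = 0$ identically on $\mathbb{R}^{4}$. Substituting the latter into the former collapses the right-hand side and leaves
\[
Z_C^{1/2}\bigl(\square C(x) + m_C^{2} C(x)\bigr) = 0.
\]
Assumption A (through its invocation of Proposition~\ref{prop:1}, applied to the field $C$) supplies the bound $0 < Z_C < \infty$, so $Z_C^{1/2} \neq 0$ and we may divide through. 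The resulting identity $(\square + m_C^{2}) C(x) = 0$ is precisely the classical Klein--Gordon equation for a free scalar field of mass $m_C$.

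The $C^{1}$ hypotheses are used in a minor but essential way: they guarantee that $F(A(\,\cdot\,))$ is $C^{1}$, so $C \in C^{1}(\mathbb{R}^{4};\mathbb{R})$ has a well-defined gradient; combined with a standard mollification/distributional reading of the variational calculation producing \eqref{(48)}, this makes the pointwise cancellation of the right-hand side and the subsequent division by $Z_C^{1/2}$ legitimate. No finer regularity of $F$ is needed because $F(A)$ never has to be differentiated twice --- the second derivative acts only on $C$ through $\square C$.

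The main conceptual obstacle --- and the point worth flagging in the discussion surrounding the proof --- is not the mechanics of the derivation but the consistency of the hypotheses. The conclusion $(\square + m_C^{2}) C = 0$ together with \eqref{(43)} asserts that the functional composite $F(A)$ of a (classical) field $A$ of mass $m$ satisfies a free Klein--Gordon equation with mass $m_C > 2m$ (by \eqref{(42)}). This is in evident tension with $A$ itself being free of mass $m$, so the real content of Proposition~\ref{prop:3} is to exhibit the rigidity of Assumption A: the joint requirements \eqref{(43)}, \eqref{(47)}, and $0 < Z_C < \infty$ force a kinematical identity on $C$ that prepares the ground for Weinberg's criterion $Z_C = 0$, which is where the composite/unstable character of $C$ must ultimately be encoded. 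I would accordingly present the proof as the short computation above and reserve the interpretive remarks --- that the conclusion itself drives one to $Z_C = 0$ in the quantum theory --- for the paragraph following the proof.
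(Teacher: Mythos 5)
Your proposal is correct and follows essentially the same route as the paper's own proof: substitute the constraint \eqref{(43)} into the Euler--Lagrange equation \eqref{(48)} so the interaction term cancels, then divide by $Z_{C}^{1/2} \neq 0$ (supplied by Assumption A via \eqref{(45)}) to obtain $\square C + m_{C}^{2}C = 0$. The only difference is cosmetic: where you invoke a mollification/distributional reading of the variational step, the paper justifies the pointwise interpretation of \eqref{(48)} by noting that the regularity hypotheses make the Lagrange density $C^{2}$ in the sense required by classical field theory, so \eqref{(48)} is an identity between continuous functions.
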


\begin{proof}
The regularity assumptions imply that the Lagrange density 
${\cal L} \in C^{2}(\mathbb{R}^{4} \times \mathbb{R} \times \mathbb{R}^{4})$ as required in 
classical field theory (\cite{BlanBru}, pg.~110): equation~(66) defines thus an equality
between continuous functions, and therefore, 
if~\eqref{(43)} holds, 
and $Z_{C} \ne 0$, from the assumption, we obtain 
	\[
		\square \, C(x) +m_{C}^{2} \, C(x) = 0 \; ; 
	\]
that is, $C(\, . \, ) $ is a free field of mass $m_{C}$.
\end{proof}

\begin{corollary}
\label{cor:6.2}
If $C$ is not a free field of mass $m_{C}$, \emph{i.e.}, if 
it is an ``interacting field'', then
the corresponding field strength renormalization constant $Z_{C}$ is zero, \emph{i.e.},
	\begin{equation}
		Z_{C} = 0 \; . 
		\label{(49)}
	\end{equation}
\end{corollary}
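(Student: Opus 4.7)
The corollary is essentially the contrapositive of Proposition~\ref{prop:3}, so the plan is to set it up as such. I would begin by noting that, by Remark~\ref{Remark 0.1} applied to the composite field $C$, the constant $Z_C$ must satisfy $0\le Z_C <\infty$ (it comes from the discrete component of a positive K\"all\'en--Lehmann measure). Hence the negation of $Z_C=0$ is precisely $0<Z_C<\infty$, which is exactly the regime in which the renormalized field $C = Z_C^{-1/2}C_0$ in \eqref{(45)} is well-defined and Assumption A licenses the Euler--Lagrange derivation leading to \eqref{(48)}.

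The main step is then a one-line contrapositive argument. Suppose, for contradiction, that $C$ is interacting (i.e.\ not a free field of mass $m_C$), yet $Z_C\neq 0$. Then $0<Z_C<\infty$, so all the hypotheses of Proposition~\ref{prop:3} are in force: the regularity $A,C\in C^{1}(\mathbb{R}^4;\mathbb{R})$ and $F\in C^{1}(\mathbb{R};\mathbb{R})$ are inherited as standing assumptions, and the composite-field relation $C(x)=F(A(x))$ is precisely the hypothesis defining a composite particle. Applying Proposition~\ref{prop:3} yields $\square C + m_C^2 C = 0$, i.e.\ $C$ is a free field of mass $m_C$, contradicting the interacting assumption. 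Therefore $Z_C=0$.

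The step I expect to need the most care is not the logical manipulation but the justification that the dichotomy ``$Z_C=0$ or $Z_C>0$'' really is exhaustive and that the latter branch legitimately triggers Proposition~\ref{prop:3}. Concretely, one has to point out that (i) negative values of $Z_C$ are excluded by the Hilbert-space positivity/positive-definiteness condition e.), which underlies~\eqref{(14.1)}; (ii) the borderline case $Z_C<\infty$ is automatic from the polynomial bound~\eqref{(11)} restricted to the discrete part; and (iii) for any $Z_C$ strictly inside $(0,\infty)$, Assumption~A together with \eqref{(45)}--\eqref{(48)} provides the Euler--Lagrange equation from which Proposition~\ref{prop:3} extracts the free wave equation. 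Once these points are laid out, the corollary follows immediately by contraposition, without any further computation.
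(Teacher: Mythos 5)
Your proposal is correct and is essentially the paper's own argument: the paper's proof is the same contrapositive of Proposition~\ref{prop:3}, noting that once $0<Z_{C}<\infty$ is excluded (since that regime would force $C$ to be free), the a priori range $0\le Z_{C}<\infty$ from \eqref{(14.1)} leaves $Z_{C}=0$ as the only remaining option. You merely spell out explicitly the points (exhaustiveness of the dichotomy, positivity excluding $Z_{C}<0$, and Assumption~A licensing the Euler--Lagrange step) that the paper's one-line proof leaves implicit.
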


\begin{proof} This follows from the assumption, whereby \eqref{(49)} is 
the only remaining option due to (22).
\end{proof}

The idea of the criterion given by Corollary~\ref{cor:6.2}, 
which we call \emph{Weinberg's criterion},
is the relation of compositeness with \emph{functional dependence} (61), which is quite different
from the other approaches in the literature, to which refer to \cite{Landsman} (also for a very
comprehensive discussion and references).

In order to check in a model 
that either $Z_{C}=0$ when \eqref{(42)} holds or that  $0<Z_{C}<\infty$
in the stable case $m_{C} < 2m$, we are beset with the difficulty to 
obtain information 
on the two-point function. An exception are those rare cases
in which the (Fock) zero particle state is persistent (\cite{HeppB}), \emph{i.e.}, 
Lee-type models.

In fact, surprisingly, there exists a quantum model of Lee type of a composite (unstable) particle, 
satisfying \eqref{(42)}, where \eqref{(49)}
was indeed found, that of Araki et al.~\cite{AMKG}. Unfortunately, however, the (heuristic) results
in \cite{AMKG} have one major defect: their model contains ``ghosts''.

There exists, however, a ghostless version of 
the model treated heuristically in \cite{AMKG}, with the correct kinematics, due to 
Hepp (Theorem 3.4, pg.~54, of \cite{HeppB}). In this version, the masses in \eqref{(42)},
which are, of course, renormalized masses, may be determined
rigorously from the selfadjointness of the renormalized Hamiltonian.
It is an interesting open problem to carry out this investigation in detail.

For atomic resonances, examples exist \cite{HuSpohn}, but the situation 
there is entirely different from the particle case, because, for a bound-state 
problem, exponential decay of the wave-functions at 
infinity provides a natural ultraviolet cutoff. For particles, the issue is to remove 
the ultraviolet cutoff, keeping the physical quantities (decay rates) fixed.

\section{Conclusion}

In conclusion, \eqref{(40)}, \eqref{(41)} and \eqref{(49)} assert, by 
totally different arguments, that the physical reason for the occurrence 
of \eqref{(17)} is the presence of photons 
or composite (unstable) particles in the theory. As Weinberg 
remarks (\cite{Weinb1}, p.~461), 
\eqref{(49)} signals that the particle is ``maximally coupled to its constituents''. 
In the case of theories 
of massless photons, the ``dressing'' of electrons by photon clouds or of photons by  
electron-positron clouds 
may also be viewed similarly: each particle loses its identity as a single object. 
In the case of unstable 
particles, this identity is not recovered for asymptotic times, in the case 
of stable particles the ``clouds'' 
hopefully disappear asymptotically in the charge sectors, allowing the 
construction of a scattering theory even
under the assumptions \eqref{(40)} and \eqref{(41)} - the parameters of 
the free particles being determined
by recourse to the nonrelativistic limit of the theory. Alternatively, and most 
interestingly, relaxing condition~\eqref{(28)}, a scattering theory for QED 
was constructed by Alazzawi and 
Dybalski \cite{ADyb} using entirely 
new ideas and methods, related to the concept of superselection sector 
introduced in~\cite{BuchRo}.

It should be remarked, however, that the crucial test of nonperturbative 
relativistic gauge theory will
be to furnish a measurable number. For QED the most famous ones are 
not related to scattering theory,
but to spectral properties of certain Hamiltonians, for the electron g-factor 
see \cite{HFG}, for the Lamb shift
see \cite{JJS},\cite{DaNu}. In the latter reference,  the importance of the natural 
line shape for the Lamb shift - an unstable bound state problem -  is discussed. 
So far, the only experimental consequence
of a non-perturbative quantum field theoretic model concerns the Thirring 
model \cite{Thirrm}, which, in its
lattice version, the Luttinger model (whose first correct solution is due to 
Lieb and Mattis \cite{LiMa}) yields
a well-established picture of conductivity along one-dimensional 
quantum wires \cite{MaMa}. 
Hence,  it would be very important to find the ground state energy of 
positronium using the methods of \cite{JaWre2}. 

The picture offered in this paper is, of course, radically different from that 
of free quantum fields, which do 
satisfy the ETCR. However, it turns out that, since the photon has a hadronic 
component \cite{Berger} and all but the lightest particles are unstable,  the condition $Z=0$  
is a universal condition in particle physics; but not for the reasons hitherto assumed!  

\section{Acknowledgements} 
We should like to thank Erhard Seiler for valuable discussions.

\end{document}